\definecolor{ik}{RGB}{11,114,195}
\definecolor{az}{RGB}{167,70,183}
\newtheoremstyle{example} 
{3pt} 
{3pt} 
{} 
{0\parindent} 
{\bf}
{:} 
{.5em} 
{} 
\newtheoremstyle{theorem}
{3pt} 
{3pt} 
{\em} 
{0\parindent} 
{\bf}
{:} 
{.5em} 
{} 
\theoremstyle{example}
\newtheorem{example}{Example}
\theoremstyle{theorem}
\newtheorem{theorem}{Theorem}
\def\E{\mathop{\rm E}}
\def\var{\mathop{\rm var}}
\let\pkg\emph
\let\code\texttt
\title{Extended-support beta regression for $[0, 1]$ responses}
\author[1]{Ioannis Kosmidis~\orcidlink{0000-0003-1556-0302}\thanks{ioannis.kosmidis@warwick.ac.uk}}
\author[2]{Achim Zeileis~\orcidlink{0000-0003-0918-3766}\thanks{Achim.Zeileis@R-project.org}}
\affil[1]{Department of Statistics, University of Warwick \authorcr Coventry, CV4 7AL, UK}
\affil[2]{Department of Statistics, University of Innsbruck \authorcr 6020 Innsbruck, Austria}
\begin{document}

\maketitle

\begin{abstract}
    We introduce the XBX regression model, a continuous mixture of
  extended-support beta regressions for modelling bounded responses
  with boundary observations. The core building block of XBX
  regression is the extended-support beta distribution, a censored
  version of a four-parameter beta distribution with the same
  exceedance on the left and right of $(0, 1)$. Hence, XBX regression
  is a direct extension of beta regression. We prove that beta
  regression and heteroscedastic normal regression with censoring at
  both $0$ and $1$ --- also known as the heteroscedastic two-limit
  tobit model in the econometrics literature --- are special cases of
  extended-support beta regression, depending on whether a single
  extra parameter is zero or infinity, respectively. To overcome
  identifiability issues due to the similarity of the beta and normal
  distributions for certain parameter values, we shrink towards beta
  regression by letting the extra parameter have an exponential
  distribution with unknown mean. A Gauss-Laguerre quadrature
  approximation results in efficient likelihood-based estimation and
  inference procedures, which the \pkg{betareg} R package implements
  since version 3.2.0. We use XBX regression to analyze investment
  decisions in a behavioural economics experiment about the occurrence
  and extent of loss aversion. In contrast to standard approaches, we
  capture both the probability of rational behaviour and the mean of
  loss aversion. Extensive comparisons with alternative approaches
  illustrate the effectiveness of the new model.
  \\
  \noindent {Keywords: {\em Boundary observations; approximate likelihood; tobit regression; loss aversion}}
\end{abstract}

\section{Introduction}

\subsection{Bounded responses and beta regression}
Bounded variables, such as rates, proportions, and fractional indices,
are abundant in data-analytic applications. As with other variable
types, linking them to available covariate information is a popular
data-analytic task, for both inference on covariate effects and
prediction. A range of statistical models has been developed in this
direction, ranging from standard least squares approaches on
transformed responses and other partially-specified models to fully
parametric approaches; see \citet{kieschnick+mccullough:2003} for an
early review and discussion on available approaches.

From the available approaches, beta regression
\citep{ferrari+cribari-neto:2004, smithson+verkuilen:2006} has emerged
as the dominant likelihood-based approach for modelling bounded
responses in terms of covariates.
It has been successfully applied in diverse fields, including biology, ecology, economics and social
sciences, finance, manufacturing, genetics, and engineering; see, for example, \citet{allenbrand+sherwood:2023}, \citet{leng+li+eser:2021}, \citet{namin+xu+zhou:2020}, and \citet{wang+malthouse+krishnamurthi:2015} for a few noteworthy applications, and \citet{douma+weedon:2019} and \citet{geissinger+khoo+richmond:2022} for recent articles making a case for its use in natural sciences and associated applied fields.
The reason for its popularity is that beta regression respects the bounded
range of the responses, and allows their mean and dispersion to be
expressed in terms of covariates through separate regression
structures. Such a construct offers high flexibility in the shape of
the response distribution, capturing both the heteroscedasticity and
the asymmetries observed in bounded responses.

Specifically, consider a random variable $Y$ with a beta distribution
with density
\begin{equation}
\label{eq:beta}
f_{\rm (B)}(y \mid \mu,\phi) =
\frac{I(0 < y < 1)}{B(\mu\phi, (1 -
  \mu)\phi)}y^{\mu\phi-1}(1-y)^{(1 - \mu)\phi-1}
\end{equation}
where $\mu \in (0, 1)$, $\phi > 0$, $B(p, q)$ $(q, p > 0)$ is the beta
function \citep[see,][\S 6.2.1]{abramowitz+stegun:1964}, and $I(A)$ is
the indicator function taking value $1$ if $A$ is true, and zero,
otherwise.  In this parameterization, $\E(Y \mid \mu, \phi) = \mu$ and
$\var(Y \mid \mu, \phi) = \mu(1-\mu)/(1+\phi)$ so that $\phi$ is a
precision parameter. The beta regression model assumes that the
observed responses $y_1, \ldots, y_n$ are realizations of beta random
variables $Y_1, \ldots, Y_n$ that are independent conditionally on
explanatory vectors $x_1, \ldots, x_n$ and $z_1, \ldots, z_n$, where
$x_i$ and $z_i$ are $p$- and $q$-dimensional vectors, respectively,
possibly sharing some or all components. As is apparent, we are assuming here
that the observed responses take value between zero and one or have
been transformed to do so as $y_i = (y_i' - a)/(b - a)$, where $y_i'$
are the original responses and $a$ and $b$ are the minimum and maximum
value they can take. The mean and precision of $Y_i$ are then linked
to the explanatory variables according to
\begin{equation}
\label{eq:link}
  g_1(\mu_i) = \eta_i =   x_i^\top \beta \quad \text{and} \quad
  g_2(\phi_i) = \zeta_i =   z_i^\top \gamma\, ,
\end{equation}
where $\beta = (\beta_1, \ldots, \beta_p)^\top$ and
$\gamma = (\gamma_1, \ldots, \gamma_q)^\top$ are the vectors of
regression parameters associated with the means and the precision of
the beta variables, respectively.

The functions $g_1(\cdot)$ and $g_2(\cdot)$ are monotone functions,
typically with the property of mapping the range of $\mu_i \in (0, 1)$
and $\phi_i \in (0, \infty)$, respectively, to the real line.  Usual
choices for $g_1(\cdot)$ are the logit, probit, and generally any
inverse of a cumulative distribution function, and for $g_2(\cdot)$
the log function.

The parameter vectors $\beta$ and $\gamma$ are typically estimated by
maximum likelihood, and standard inferential procedures, such as
likelihood ratio, Wald, and score tests, apply on the grounds of
central limit theorems for the derivatives of the log-likelihood under
assumptions on model adequacy. \citet{gruen+kosmidis+zeileis:2012}
provide more details and describe a range of modelling strategies and
methods based on the beta regression model.

\subsection{Boundary observations}
\label{sec:boundary}

One important limitation of the beta regression model as defined
by~(\ref{eq:beta}) and (\ref{eq:link}) is that likelihood-based
estimation and inference is not possible when at least one of the
observed responses is either zero or one. Specifically, depending on
the parameter values, the density (\ref{eq:beta}) at those
observations is either zero or infinity. As a result and regardless of
the sample size, the log-likelihood of the beta regression model is
either zero or infinite or undetermined, and maximum likelihood
estimation and the associated inferential procedures cannot be
used. This is the reason that popular software for beta regression
modelling, such as the \pkg{betareg} R package \citep{betareg,
  gruen+kosmidis+zeileis:2012} prior to version 3.2-0, returned an
error when there is at least one response value that is zero or one.

One popular and straightforward approach to deal with boundary
observations is by \citet{smithson+verkuilen:2006}, who propose to
transform the response to $(y_i + u) / (1 + 2 u)$, with
$u = 1 / \{2 (n - 1)\}$, and then fit the beta regression model
in~(\ref{eq:beta}) and~(\ref{eq:link}) with standard software. The
transformed responses, then lie in
$(0.5/n, 1 - 0.5/n) \subset (0, 1)$, and, as a result, there are no
issues when fitting the beta regression model.  The proposal in
\citet{smithson+verkuilen:2006} has the advantage of preserving the
parsimony and the convenience in estimation and inference that is
provided by beta regression model in~(\ref{eq:beta})
and~(\ref{eq:link}), but the value of $u$ is ad hoc and, as
Example~\ref{ex:reading-skills} demonstrates, can have a marked effect
on inference.

\begin{example}
  \label{ex:reading-skills}
  \citet[Example~3]{smithson+verkuilen:2006} models the contribution
  of nonverbal IQ and dyslexic versus nondyslexic status to the scores
  of 44 children on a test of reading accuracy. The scores were
  rescaled to lie in the unit interval, and 13 out of 44 children
  achieved a score of one.  We transform the responses using
  $u = 1 / 86$, and fit the beta regression model with
  $g_1(\mu_i) = \log \{ \mu_i / (1 - \mu_i) \}$,
  $g_2(\phi_i) = \log(\phi_i)$ in~(\ref{eq:link}), where $\eta_i$
  involves an intercept and the main and interaction effects of IQ and dyslexia status,
  and $\zeta_i$ involves only the main effects of IQ and dyslexia status.\footnote{The numerical
    figures in \citet[Table~5]{smithson+verkuilen:2006} can be
    reproduced by transforming only the boundary responses, and
    leaving the non-boundary ones unaltered. The same is true with
    \citet[Table 6.2]{smithson+merkle:2013}.} We focus on the evidence
  against the omission of the interaction term.
  \begin{figure}[t!]
    \begin{center}
      \includegraphics[width = \textwidth]{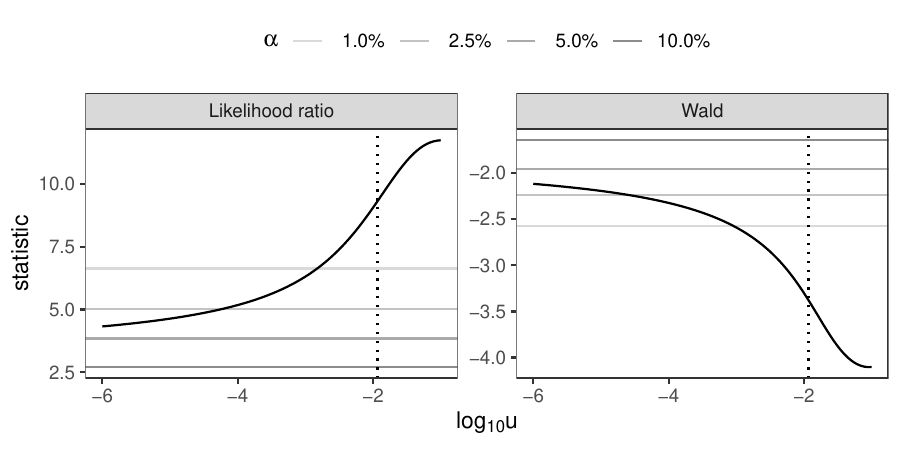}
    \end{center}
    \caption{Likelihood ratio and Wald statistics for the inclusion of
      an interaction term between dyslexia status and IQ in the
      setting of Example~\ref{ex:reading-skills} versus $u$
      ($\log_{10}$ scale). The vertical lines are the quantiles of the
      $\chi^2_1$ (likelihood ratio test) and normal distribution
      (two-sided Wald test) at levels $1.0\%$, $2.5\%$, $5.0\%$,
      $10.0\%$. The vertical dashed line is at $u = 1 / 86$ as per the
      proposal in \citet{smithson+verkuilen:2006}.}
    \label{fig:sensitivity}
  \end{figure}
  Figure~\ref{fig:sensitivity} shows the value of the likelihood ratio
  and Wald statistics for $u$ ranging from $10^{-6}$ to $10^{-1}$. As
  is apparent, both tests are sensitive to the choice of $u$ with the
  evidence for the interaction term ranging from weak to
  strong. Notably, the closer the transformed responses get to the
  observed responses the weaker the evidence against the omission of
  an interaction term.
\end{example}

Another widely-used approach to deal with boundary observations is to
construct two- or three-part hurdle models. Such models are formally
mixtures of point masses at zero and/or one with a continuous
distribution for the non-boundary observations.  \citet{hoff:2007},
\citet{cook+kieschnick+mccullough:2008}, \citet{ospina+ferrari:2010},
\citet{calabrese:2012}, \citet{ospina+ferrari:2012} define such a
model by assuming a beta regression model for the non-boundary
observations and use a regression structure for the hurdle part(s) of the
model. The \pkg{zoib} \citep{liu+kong:2015} and \pkg{gamlss}
\citep{rigby+stasinopoulos:2005} R packages can fit one- and two-part
hurdle models\footnote{Following the terminology set in the seminal
  work of \citet{ospina+ferrari:2010}, the two- and three-part hurdle
  models are typically referred to as zero-or-one and zero-and-one
  inflated beta regression, respectively. We chose the term hurdle
  models instead, because the distributions in
  \citet{ospina+ferrari:2010} can be accurately described as zero
  and/or one being hurdles, which when passed a $(0, 1)$ realization
  of a beta random variable is generated. In contrast, an inflated
  distribution is customarily understood as mixing the process that
  generates boundary values, with a distribution whose support already
  includes the boundary values \citep[see, for
  example,][]{min+agresti:2002}.}.  Hurdle models are a natural choice
when it can be assumed that the boundary observations are generated by
a distinct process from the one that generates the non-boundary
observations. The separate regression structures for the hurdle part(s)
of the model allow for linking the mixing probabilities
directly with explanatory variables, offering both flexibility and
convenience in estimation. If the beta regression part and the hurdle
part(s) do not share any parameters then the likelihood is
separable and can be maximized by separately estimating a binomial or
multinomial regression model (depending on whether only one or both
boundaries get point masses), and a beta regression model for the
non-boundary observations.

Nevertheless, due to the many regression specifications involved in
two- and three-part hurdle models, such models tend to be less
parsimonious than the proposal in \citet{smithson+verkuilen:2006}, which
involves only mean and dispersion regression structures. Furthermore,
hurdle models may be unnecessarily complex and hard to interpret, when
there is merit in assuming that there is a single process generating
the responses. For example, in Example~\ref{ex:reading-skills}, it is
more natural to assume that there is a latent ability scale in terms
of reading accuracy, which is linked to an observable test score with
a known minimum and maximum value. A score of one represents that the
child's ability is sufficiently large to successfully complete all
tasks in the given test. A hurdle model, in contrast, estimates separate
processes for the perfect scores and all other scores (including those
that are close to perfect).

Double censoring is another well-known approach for dealing with
bounded-domain responses with boundary observations, particularly when
it is natural to assume that there is a single process generating
the response values. For double censoring, the observed bounded-domain
response is assumed to be a realization of
$Y = \max(\min(Y^*, 1), 0)$, where $Y^*$ is a random variable with
density function $f(\cdot \mid \theta)$. The marginal expectation of $Y$
is then a linear combination of the censoring probabilities and the
expectation of the truncated distribution of $Y^*$ in $(0,
1)$. Specifically,
\begin{equation}
  \label{eq:censored_mean}
  \E(Y) = P(Y^* \ge 1 \mid \theta) + P(0 < Y^* < 1 \mid
  \theta)\mathop{\rm E}(Y^* \mid 0 < Y^* < 1, \theta) \, .
\end{equation}
The likelihood about $\theta$ based on a sample $y_1, \ldots, y_n$ is
then
\begin{equation}
  \label{eq:likelihood}
  \prod_{i: y_i = 0} P(Y^*_i \le 0 \mid \theta) \prod_{i: y_i = 1} P(Y^*_i \ge 1 \mid \theta) \prod_{i: y_i \in (0, 1)} f(y_i \mid \theta) \, ,
\end{equation}
where $\prod_{A} a$ is $1$ if $A$ is the empty set. The standard
approach towards defining a censored regression model is to put a
regression structure on the components of $\theta$. The most
widely-used censored regression model is the censored normal one,
also known as the two-limit tobit. This model assumes that
$Y^*_1, \ldots, Y^*_n$ are independent conditionally on covariate
vectors $x_1, \ldots, x_n$, with $Y^*_i$ distributed according to a
normal distribution with mean $x_i^\top \beta^*$ and variance
$\sigma_i^2 = \sigma^2$ \citep[see, for example][Section~6.7, for
definition and discussion]{rosett+nelson:1975, maddala:1983}. The
heteroscedastic version of the two-limit tobit assumes that the
variances are linked to explanatory variables as
$g(\sigma_i) = z_i^\top\gamma^*$, where $g(\cdot)$ is a monotone link
function that maps $(0, \infty)$ to the real line \citep[see, for
example,][]{messner+mayr+zeileis:2016}. \citet{dixon+sonka:1982}
compare least squares and maximum likelihood estimation of the
two-limit tobit in terms of estimates in an empirical application, and
\citet{hoff:2007} compares the two-limit tobit model to the
semi-parametric approach in \citet{papke+wooldridge:1996} and the
inflated beta distribution in \citet{cook+kieschnick+mccullough:2008}
for data envelopment analysis. \citet{dixon+sonka:1982},
\citet{hoff:2007}, and \citet*[\S 19.3]{greene:2011} touch on the
interpretation of the parameters of the two-limit tobit. The
disadvantage of the two-limit tobit is that it is less flexible than
the beta distribution and cannot accommodate features that often
occur in bounded-domain variables such as high skewness or kurtosis.

\subsection{Our contribution}

We develop the extended-support beta mixture (XBX) regression model,
which accommodates boundary values of $0$, $1$ or both $0$ and $1$ for
the responses, while preserving the parsimony and flexibility of beta
regression. In contrast to the heteroscedastic two-limit tobit model,
XBX regression has the flexibility of beta regression in the shape of
the response distribution, allowing both for the heteroscedasticity
and the asymmetries that may be observed in bounded responses.

The main building block for XBX regression is the extended-support
beta (XB) distribution, which is constructed by first extending the
support of the beta distribution to $(-u, 1 + u)$, $u > 0$, and then
censoring at both $0$ and $1$. Thus, the XB distribution can be
understood as a four-parameter beta distribution, restricted to
symmetric exceedances and subsequently censored to obtain point masses
on the boundaries of the unit interval. We show that the XB
distribution converges to a beta distribution as $u \to 0$, and to a
normal distribution censored in $[0, 1]$, as $u \to \infty$. Hence,
regression models based on the XB distribution bridge beta regression
and censored normal models like the heteroscedastic two-limit tobit,
at the expense of a single unknown constant $u$. Because the beta
distribution can assume similar shapes to the normal distribution in
its support for certain parameter values, it is natural to shrink XB
towards beta. We achieve that through the XBX distribution, which is a
continuous mixture of XB distributions with $u$ having an exponential
distribution with unknown mean $\nu$. Then, as the mean $\nu$ of the
exponential mixing distribution decreases, XBX gets closer to beta.

The likelihood of the XBX regression model can be conveniently
approximated using a Gauss-Laguerre quadrature, giving rise to
efficient estimation and inference procedures based on maximum
approximate likelihood.

Finally, we present a case study highlighting the benefits of XBX
regression. We revisit the analysis of an experiment from behavioural
economics about the occurrence and extent of loss aversion
\citep{glaetzleruetzler+sutter+zeileis:2015}, which was originally
analyzed using linear regression for the mean loss aversion
only. Here, we demonstrate that the XBX model leads to qualitatively
the same results for the effects of mean loss aversion, simultaneously
allowing for economically relevant insights about the probability of
rational behaviour.

The \pkg{betareg} R package, since version 3.2-0, provides methods for
fitting and drawing inferences from XBX regression models.

Section~\ref{sec:xb} presents the XB distribution and shows that the
beta and the censored normal distribution are its limiting
cases. Section~\ref{sec:xbx} introduces the XBX regression model, and
discusses parameter estimation and inference using approximate
likelihood. Section~\ref{sec:lossaversion} illustrates the benefits of
XBX regression through the analysis of a loss aversion experiment from
behavioral economics.
Section~\ref{sec:xbx-vs-htobit}
presents a comprehensive simulation experiment comparing XBX
regression to the heteroscedastic two-limit tobit model based on the
continuous ranked probability score. We conclude with remarks
and directions for further research in Section~\ref{sec:conculding}.

\section{Extended-support beta distribution}
\label{sec:xb}

Rescaling the response variable as proposed by \citet{smithson+verkuilen:2006},
as briefly summarized in Section~\ref{sec:boundary}, can also be usefully thought of as modelling the
original responses using a four-parameter beta distribution with the same, ad-hoc
exceedance on the left and the right of $(0, 1)$. The four-parameter
beta distribution, denoted as ${\rm B4}(\mu, \phi, u_1, u_2)$, has
density
\begin{equation}
  \label{eq:4parameterbeta}
  f_{\rm (B4)}(y \mid \mu, \phi, u_1, u_2) = f_{ \rm (B)}\left(\frac{y - u_1}{u_2 - u_1} \mid \mu, \phi \right) \frac{1}{u_2 - u_1} \, ,
\end{equation}
where ${u}_1 < {u}_2$ are the boundary parameters. 

\citet[Chapter
22]{johnson+kotz+balakrishnan:1995} describe the characteristics of
the four-parameter beta distribution and provide a coherent literature
review of the challenges and identifiability issues that are involved
in the estimation of the boundary parameters. Apart from the
normalizing constant $1 / (u_2 - u_1)$, the
\citet{smithson+verkuilen:2006} proposal is equivalent to setting
$-u_1 = u_2 - 1 = u = 1/\{2(n - 1)\}$. This makes
density~(\ref{eq:4parameterbeta}) have support on
$(-0.5 / (n - 1), 1 + 0.5 / (n - 1))$, and hence results in positive
likelihood contributions even from zero and/or one
responses. Figure~\ref{fig:xbeta} shows the density of the
standard beta ${\rm B}(\mu, \phi)$ and that of ${\rm B4}(\mu, \phi, -u, 1 + u)$ with
$u \in \{0.01, 0.1, 0.5\}$ and various settings for $\mu$ and
$\phi$. In addition to the inferential issues that are illustrated in
Example~\ref{ex:reading-skills}, the resulting model cannot be
considered as a generative model for $y$ because it has wider support
than $[0, 1]$.

\begin{figure}[t!]
  \begin{center}
    \includegraphics[width = \textwidth]{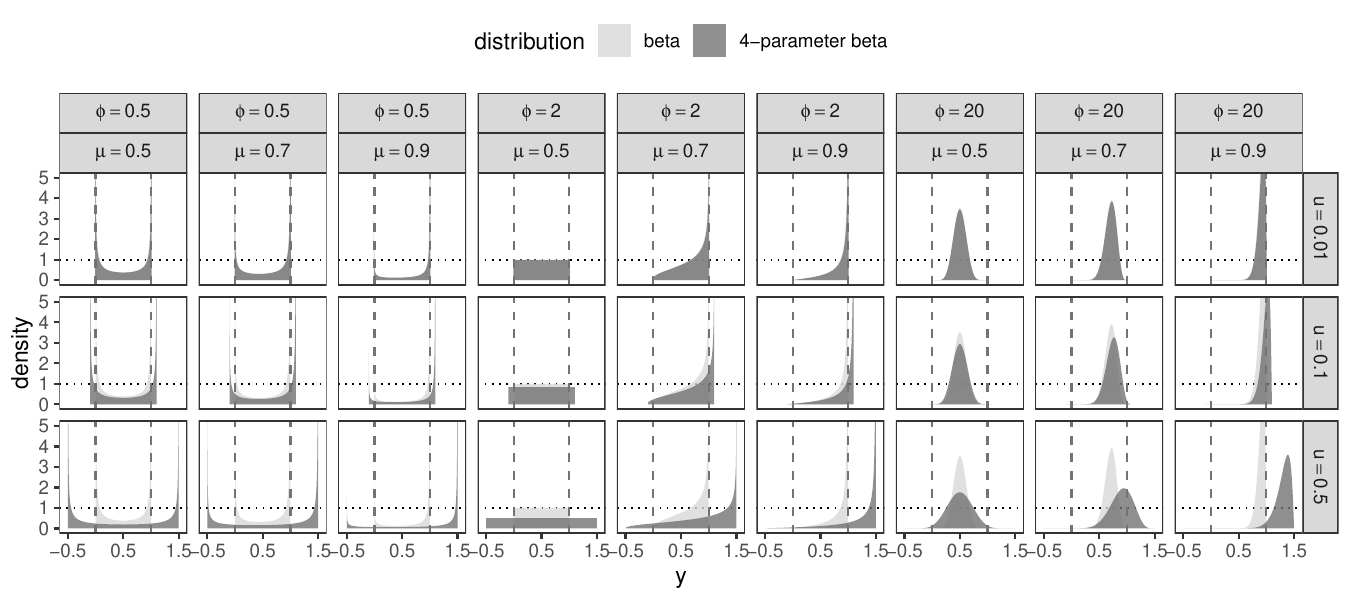}
  \end{center}
  \caption{Densities of the beta ${\rm B}(\mu, \phi)$ and the four-parameter beta
    ${\rm B4}(\mu, \phi, -u, 1 + u)$ distribution for all combinations of $\mu \in \{0.5, 0.7, 0.9\}$, $\phi \in \{0.5, 2, 20\}$, and $u \in \{0.01, 0.1, 0.5\}$. The horizontal dotted line is at 1, and the vertical dashed lines at $0$ and $1$.}
  \label{fig:xbeta}
\end{figure}

Theorem~\ref{th:beta4} below shows that the transformation suggested
in \citet{smithson+verkuilen:2006} has an attractive property;
${\rm B4}(\mu, \phi, -u, 1 + u)$ is the beta distribution for $u = 0$,
and converges to the normal distribution as $u \to \infty$, or,
equivalently, as its support approaches the real line.

\begin{theorem}
  \label{th:beta4}
  Suppose that $Y$ is distributed according to the four-parameter beta
  distribution with density~(\ref{eq:4parameterbeta}) and
  $u = -u_1 = u_2 - 1$. Then,
  \begin{enumerate}
  \item \label{eq:zero} If $u = 0$, $Y$ has the beta density
    $f_{\rm (B)}(y \mid \mu, \phi)$ in~(\ref{eq:beta}).
  \item \label{eq:infinity} If
    $\mu_* = \mathop{\rm E}(Y \mid \mu, \phi, u) \in \Re$ and
    $\sigma_*^2 = \var(Y \mid \mu, \phi, u) \in (0, \infty)$ are fixed,
    then, as $u \to \infty$, $Y$ converges in distribution to
    $\mu_* + \sigma_* Z$, where $Z$ is a standard normal random
    variable.
  \end{enumerate}
\end{theorem}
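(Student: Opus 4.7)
Part~\ref{eq:zero} is immediate: substituting $u_1 = 0$ and $u_2 = 1$ into~(\ref{eq:4parameterbeta}) returns $f_{(\mathrm{B})}(y \mid \mu, \phi)$.

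For part~\ref{eq:infinity}, the plan is to reduce the statement to a central limit theorem for the standard beta distribution. I would set $W = (Y + u)/(1 + 2u)$, so that, by construction, $W$ has beta density $f_{(\mathrm{B})}(\cdot \mid \mu, \phi)$ and $Y = (1 + 2u) W - u$. This gives $\mu_* = (1 + 2u)\mu - u$ and $\sigma_*^2 = (1 + 2u)^2 \mu(1-\mu)/(1 + \phi)$. Solving for $\mu$ and $\phi$ with $\mu_*$ and $\sigma_*^2$ held fixed yields $\mu(u) = (\mu_* + u)/(1 + 2u) \to 1/2$ and $1 + \phi(u) = (1 + 2u)^2 \mu(u)(1 - \mu(u))/\sigma_*^2 \to \infty$ of order $u^2$ as $u \to \infty$. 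A short algebraic check then gives the key identity
\begin{equation*}
  \frac{Y - \mu_*}{\sigma_*} = \frac{W - \mu(u)}{\sqrt{\var(W \mid \mu(u), \phi(u))}}\, ,
\end{equation*}
so the claim reduces to showing that the right-hand side converges weakly to a standard normal variate along this path of beta parameters.

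For the beta CLT I would use the representation $W = G_1/(G_1 + G_2)$, with $G_1 \sim \mathrm{Gamma}(\mu\phi, 1)$ and $G_2 \sim \mathrm{Gamma}((1 - \mu)\phi, 1)$ independent. Since both shape parameters tend to infinity, each $(G_j - \alpha_j)/\sqrt{\alpha_j}$ is asymptotically standard normal; a first-order delta method applied to $(g_1, g_2) \mapsto g_1/(g_1 + g_2)$ at $(\alpha_1, \alpha_2) = (\mu\phi, (1 - \mu)\phi)$ yields a limiting normal with mean $\mu$ and variance $\mu(1-\mu)/\phi$, matching $\var(W \mid \mu, \phi)$ to leading order. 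Standardizing and applying Slutsky's theorem then completes the reduction.

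The main technical nuisance is that $\mu$ varies with $u$ rather than being fixed, so one needs the beta CLT uniformly along a sequence. This is not a real obstacle: the gamma CLT holds uniformly in the shape parameter on compact subsets of $(0, \infty)$, and $\mu(u)$ eventually lies in any fixed neighborhood of $1/2$, on which the gradient of $(g_1, g_2) \mapsto g_1/(g_1 + g_2)$ at $(\mu\phi, (1-\mu)\phi)$ depends continuously on $\mu$. An alternative route would be to apply Stirling's formula to the beta density in~(\ref{eq:beta}) to obtain pointwise convergence of the density of the standardized $W$ to the standard normal density, and then invoke Scheffé's lemma, trading the delta-method argument for an explicit density-level computation.
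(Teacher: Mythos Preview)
Your proposal is correct. The paper's own proof takes precisely the density-convergence route you sketch as an alternative at the end: it writes down the density of $T = (Y - \mu_*)/\sigma_*$ explicitly, observes that both shape parameters $\alpha = \mu\phi$ and $\beta = (1-\mu)\phi$ diverge (quadratically in $u$) when $\mu_*$ and $\sigma_*$ are held fixed, invokes Lemma~A.1 of \citet{moscovich+etal:2016} for the pointwise convergence of the standardized beta density to the standard normal density, and then applies Scheff\'e's theorem. So your ``alternative route'' is the paper's main route, with the Stirling computation outsourced to an existing lemma.

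Your primary route, via the gamma-ratio representation $W = G_1/(G_1 + G_2)$ and a delta-method expansion, is a genuinely different argument. What the paper's approach buys is brevity and a clean handling of the moving parameters: the cited lemma already covers any path with $\alpha \to \infty$ and $\beta \to \infty$, so the ``technical nuisance'' you flag never surfaces. What your approach buys is self-containment and a transparent connection to classical CLT machinery, at the cost of having to argue the triangular-array version of the delta method (varying $\mu(u)$) explicitly; your remarks about uniformity of the gamma CLT and continuity of the gradient near $\mu = 1/2$ are the right way to close that gap.
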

\begin{proof}
  Part~\ref{eq:zero} is direct by setting $u_1 = 0$ and $u_2 = 1$
  in~(\ref{eq:4parameterbeta}). For part~\ref{eq:infinity}, let
  $\alpha = \mu \phi$ and $\beta = (1- \mu)\phi$. For fixed $\mu_*$
  and $\sigma_*$, $\alpha$ and $\beta$ grow quadratically with $u$, so
  $\alpha \to \infty$ and $\beta \to \infty$ as $u \to \infty$. From
  the properties of the four-parameter beta distribution
  $\mu_* = (1 + 2u) \mu - u$ and $\sigma_*^2 = (1 + 2u)^2 \sigma^2$,
  where $\sigma^2 = \mu (1 - \mu) / (1 + \phi)$. Hence,
  $(y + u) / (1 + 2u) = \mu + \sigma t$, where
  $t = (y - \mu_*) / \sigma_*$. Then, writing the
  density~(\ref{eq:4parameterbeta}) in terms of $t$ and multiplying it
  by $d y / d t$ gives that the density of
  $T = (Y - \mu_*) / \sigma_*$ is
  \begin{equation}
    \label{eq:dens_t}
    h(t \mid \mu, \phi, u) = \frac{\sigma (\mu + \sigma t)^{\alpha - 1}(1 - \mu - \sigma t)^{\beta - 1}}{B(\alpha, \beta)} I\left(\frac{-u - \mu_*}{\sigma_*} < t < \frac{1 + u - \mu_*}{\sigma_*}\right) \, .
  \end{equation}
  For $u \to \infty$, the indicator function on the right hand side
  of~(\ref{eq:dens_t}) has value $1$ for any finite~$t$. Furthermore,
  \citet[Lemma~A.1]{moscovich+etal:2016} shows that the first factor
  on the right hand side of~(\ref{eq:dens_t}) converges to
  $e^{-t^2/2} / \sqrt{2\pi}$ for any $t$ as $\alpha \to \infty$ and
  $\beta \to \infty$. By Scheff\'e’s theorem, $T$ converges in
  distribution to $Z$, and by the continuous mapping theorem $Y$
  converges in distribution to $\mu_* + \sigma_* Z$.
\end{proof}

One way of defining a distribution with $[0, 1]$ support is to use the
censored version of the four-parameter beta distribution. Suppose that
$Y^*$ has a ${\rm B4}(\mu, \phi, -u, 1 + u)$ distribution. We call the
distribution of $Y = \max(\min(Y^*, 1), 0)$ the extended-support beta
distribution, and denote it as ${\rm XB}(\mu, \phi, u)$. The density
function of the XB distribution is
\begin{equation}
  \label{eq:xb}
  f_{\rm (XB)}(y \mid \mu, \phi, u) = \left\{
    \begin{array}{ll}
    \displaystyle F_{\rm (B)}\left(\frac{u}{1 + 2u} \mid \mu, \phi\right)\,, & \text{if } y = 0 \\[1em]
    \displaystyle f_{\rm (B)}\left(\frac{y + u}{1 + 2u} \mid \mu, \phi\right)\frac{1}{1 + 2u}\,, & \text{if } y \in (0, 1) \\[1em]
    \displaystyle 1 - F_{\rm (B)}\left(\frac{1 + u}{1 + 2u} \mid \mu, \phi\right)\,, & \text{if } y = 1 \\
  \end{array}\right. \, ,
\end{equation}
where $F_{\rm (B)}(y \mid \mu, \phi)$ is the distribution function of
the beta distribution with density~(\ref{eq:beta}). For the parameter
settings considered in Figure~\ref{fig:xbeta}, the density for the
continuous part of the XB distribution is exactly equal to the dark
grey regions in $(0, 1)$, while the discrete part will consist of
point masses at $0$ and $1$ that are equal to the probability at the
left and right of $(0, 1)$.

A direct corollary of Theorem~\ref{th:beta4} is that the XB
distribution coincides with the beta distribution for $u = 0$, and
that it converges to the normal distribution censored in $[0, 1]$ as
$u \to \infty$. Hence, regression models based on the XB distribution
allow to parameterize the transition from beta regression to the
heteroscedastic two-limit tobit, at the expense of a single extra
parameter~$u$. The beta distribution and the normal distribution
limits of the XB distribution can assume similar shapes in $(0, 1)$
for certain parameter settings. Hence, it is natural to consider
shrinking $u$ to $0$, or, equivalently, shrinking the XB distributions
towards the beta distribution, to avoid any identifiability issues
that may arise in the estimation of the parameters of the XB
distribution, while preserving the flexibility offered by the beta
distribution.

\section{Continuous mixtures of extended-support beta distributions}
\label{sec:xbx}

\subsection{Exponential extended-support beta mixture}

Shrinkage of the XB distribution towards beta can be induced in a
tuning-parameter-free way by constructing a continuous mixture of XB
distributions. In this direction, we consider a continuous mixture of
XB distributions, where $u$ has an exponential distribution with
unknown mean $\nu$, that is $u \sim {\rm Exp}(\nu)$ and
$Y \mid u \sim {\rm XB}(\mu, \phi, u)$.  We denote the marginal
distribution of $Y$ as ${\rm XBX}(\mu, \phi, \nu)$, which has density
\begin{equation}
  \label{eq:xbx}
  f_{\rm (XBX)}(y \mid \mu, \phi, \nu) = \nu^{-1}\int_{0}^\infty
  f_{\rm (XB)}(y \mid \mu, \phi, u) e^{-u/\nu} du \,,
\end{equation}
where $f_{\rm (XB)}(y \mid \mu, \phi, u)$ is defined
in~(\ref{eq:xb}). By its definition, for fixed $\phi$ and $\nu$, the
XBX density is symmetric in that
$f_{\rm (XBX)}(y \mid \mu, \phi, \nu) = f_{\rm (XBX)}(1 - y \mid 1 -
\mu, \phi, \nu)$.  Density~(\ref{eq:xbx}) is not available in closed
form, but it can be accurately and efficiently approximated using a
Gauss-Laguerre quadrature rule
\citep[see,][Equation~25.4.45]{abramowitz+stegun:1964} as
\begin{equation}
  \label{eq:xbx-a}
f_{\rm (XBX)}(y \mid \mu, \phi, \nu) \simeq \sum_{t = 1}^T W_t f_{\rm (XB)} (y \mid \mu, \phi, \nu Q_t) \,,
\end{equation}
where $W_t$ and $Q_t$ $(t = 1, \ldots, T)$ are weights and nodes whose
calculation and derivation are through Laguerre polynomials.

\begin{figure}[t!]
  \begin{center}
    \includegraphics[width = \textwidth]{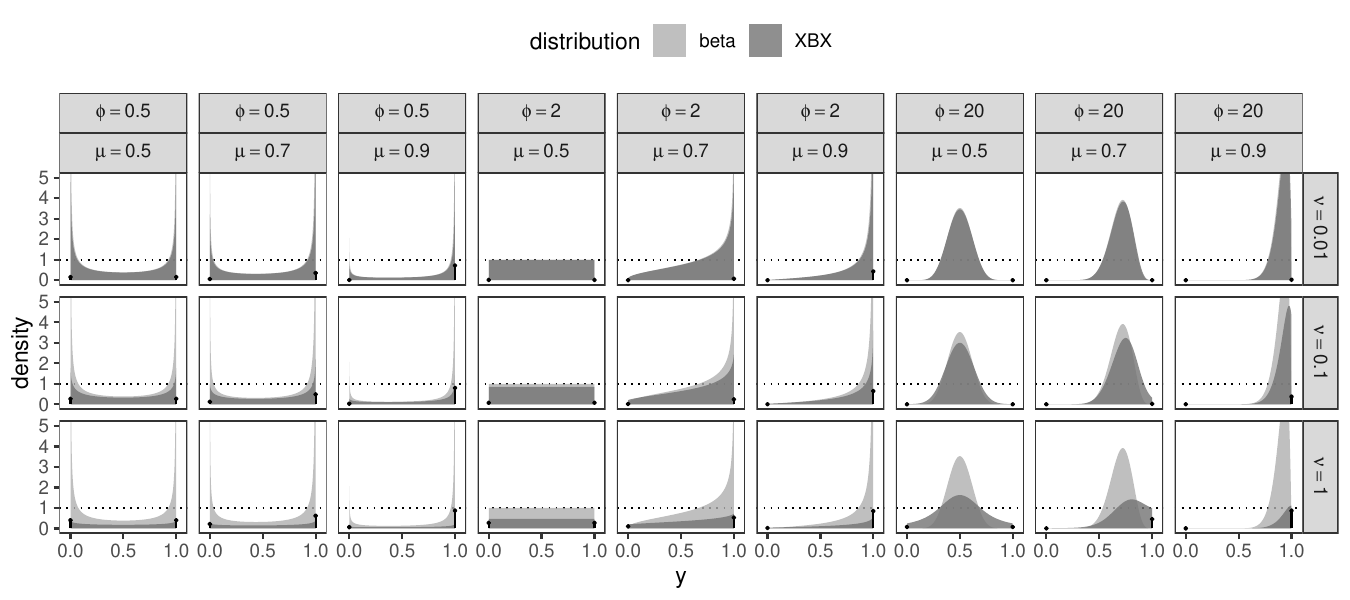}
  \end{center}
  \caption{Densities of the beta and XBX distribution for all combinations of $\mu \in \{0.5, 0.7, 0.9\}$, $\phi \in \{0.5, 2, 20\}$, and $\nu \in \{0.01, 0.1, 1\}$. The horizontal dotted line is at 1. The height of the vertical black segments represents the point mass at $0$ and $1$ for the corresponding parameter setting.}
  \label{fig:xbx}
\end{figure}

Figure~\ref{fig:xbx} shows the density in $(0, 1)$ and the point
masses at $0$ and $1$ of the XBX distribution for a range of values
for $(\mu, \phi, \nu)$. As expected, as $\nu \to 0$, the XBX
distribution converges to the beta distribution. On the other hand, as
$\nu$ increases the point masses at the boundaries increase, and the
relative size of that increase at $0$ and $1$ depends on the skewness
of the XBX distribution.

\subsection{XBX regression}
The operations of mixing and censoring that are involved in the
definition of the XBX distribution are interchangeable. Specifically,
the XBX distribution can be equivalently defined as the distribution
of $Y = \max(\min(Y^*, 1), 0)$, where
$Y^* \mid u \sim {\rm B4}(\mu, \phi, - u, 1 + u)$ , and
$u \sim {\rm Exp}(\nu)$.  The expectation and variance of $Y^* \mid u$
are $\E(Y^* \mid \mu, \phi, u) = (1 + 2 u)\mu - u$ and
$\var(Y^* \mid \mu, \phi, u) = (1 + 2u)^2 \mu (1 - \mu) / (1 + \phi)$,
respectively. The marginal expectation and variance of $Y^*$, when
$u \sim {\rm Exp}(\nu)$, are directly obtained by the laws of total
expectation and variance as
\begin{equation}
  \label{eq:b4mix-mean-var}
  \begin{aligned}
  \E(Y^* \mid \mu, \phi, \nu) & = (1 + 2\nu) \mu - \nu \, , \\ 
  \var(Y^* \mid \mu, \phi, \nu) & = (2 \mu - 1)^2 \nu^2 + (1 + 4\nu + 8 \nu^2)\frac{\mu(1 - \mu)}{1 + \phi} \, ,
\end{aligned}
\end{equation}
respectively. Similarly to the heteroscedastic two-limit tobit, we can
assume that $Y^*$ is a latent variable in $\Re$, which is linked to an
observable response in $[0, 1]$, where the boundary values of $0$ and
$1$ represent too small and too large values of the latent variable.
As discussed in Section~\ref{sec:boundary}, this is a natural
assumption in assessment scores and testing settings, like that of
Example~\ref{ex:reading-skills}.

From expressions~(\ref{eq:b4mix-mean-var}), the parameters $\mu$ and
$\phi$ have direct interpretations in terms of location and precision
of the distribution of $Y^*$, as they do in beta regression. In
particular, the expectation of $Y^*$ is a linear function of $\mu$,
with slope $(1 + 2 \nu)$ and intercept $-\nu$. On the other hand, the
variance of $Y^*$ decreases as $\phi$ increases, with the rate of
decrease increasing quadratically with $\nu$.

So, an XBX regression with mean and dispersion effects for the latent
distribution can directly be defined by linking $\mu$ and $\phi$ to
covariates as in~(\ref{eq:link}). From Figure~\ref{fig:xbx}, it is
directly apparent that the response distribution in XBX regression can
have much more flexible shapes than the censored normal distribution
underlying the heteroscedastic two-limit tobit; see, also
Figure~\ref{fig-supp:xbx-vs-cn} and
Figure~\ref{fig-supp:xbx-vs-cn-hist} in
Section~\ref{sec-supp:xbx-vs-cn} of the supplementary material
document for a side-by-side comparison of the XBX distribution to the
doubly-censored normal distribution.

The expectation and variance of a random variable with XBX
distribution are, of course, nonlinear functions of $\mu$, $\phi$ and
$\nu$ as they depend on the probabilities of $Y^*$ being greater or
equal to $1$ and $Y^*$ being in $(0, 1)$, and on moments of the
truncated distribution of $Y^*$ in $(0,
1)$. Figure~\ref{fig-supp:xbx-e} and Figure~\ref{fig-supp:xbx-v} of
Section~\ref{sec:xbx-e-v} the supplementary material show the
expectation and variance of the XBX distribution as a function of
$\mu$ for a range of values for $\phi$ and $\nu$. For reference,
Figure~\ref{fig-supp:xbx-e} and Figure~\ref{fig-supp:xbx-v} also show
the expectation ($\mu$) and variance ($\mu(1 - \mu) / \phi$) of the
beta distribution as functions of $\mu$ for a range of values of
$\phi$.

As expected, for small $\nu$, the expectation and variance of the XBX
distribution are almost identical to those of the beta
distribution. Notably, for all values of $\nu$ and $\phi$, the
expectation of the XBX distribution is found to be a monotone
increasing function of $\mu$ and rotationally symmetric at
$(0.5, 0.5)$. Hence, any interpretations of the signs of the
coefficients in the regression structure for $\mu$ in XBX regression
are as in standard beta regression, with the usual caveat that changes
in covariates in the regression structure on $\mu$ impact also the
variance of the response distributions. Like for the beta
distribution, the variance of the XBX distribution as a function of
$\mu$ is bounded from above for any $\phi$ and $\nu$, with the
difference that it can assume bimodal shapes. Furthermore, as for
beta, the variance of the XBX distribution decreases as $\phi$
increases. In contrast to beta, however, the mean of the XBX
distribution increases with $\phi$ if $\mu > 0.5$ and decreases if
$\mu < 0.5$, most notably for large values of $\nu$. So, in XBX
regression, if a regression structure on $\phi$ is present alongside a
regression structure on $\mu$, the associated covariates would also
contribute to the mean of the response distribution. The same holds
for other censored regression models, like the two-limit tobit, where
regression structures on the variance of the latent variable,
contribute to the mean of the response distribution.

\subsection{Maximum approximate likelihood}
From equation~(\ref{eq:xbx-a}), the log-likelihood function of the XBX
regression model can be approximated using the Gauss-Laguerre
quadrature rule. The parameters are the regression coefficients~$\beta$
and~$\gamma$ from (\ref{eq:link}) as well as the log
exceedance parameter $\xi = \log\nu$, to assure that all parameters
have unrestricted ranges.  This yields the approximate log-likelihood
\begin{equation}
  \label{eq:loglik-xbx}
    \ell^{\rm (a)}(\beta, \gamma, \xi) = \sum_{i = 1}^n \log \sum_{t = 1}^T W_t L_i(\beta, \gamma, e^\xi Q_t) \, ,
\end{equation}
where
$L_i(\beta, \gamma, u) = f_{\rm (XB)} (y_i \mid \mu_i, \phi_i, u)$ is
the contribution of the $i$th observation to the likelihood of the
corresponding regression model based on the XB distribution;
see~(\ref{eq:xb}). The gradient of~(\ref{eq:loglik-xbx}) is
\begin{equation}
  \label{eq:score-xbx}
  s^{\rm (a)}(\beta, \gamma, \xi) = \sum_{i = 1}^n \frac{\sum_{t = 1}^T W_t L_i(\beta, \gamma, e^\xi Q_t) \nabla \log L_i(\beta, \gamma, e^\xi Q_t) }{\sum_{t = 1}^T W_t L_i(\beta, \gamma, e^\xi Q_t)} \, .
\end{equation}

Implementing the approximate log-likelihood~(\ref{eq:loglik-xbx}) for
XBX regression is straightforward. The key ingredients are a routine
to compute the weights and nodes of the Gauss-Laguerre quadrature,
such as the \code{gauss.quad()} function of the \pkg{statmod} R package
\citep{smyth:2005}, and routines to compute the density and
distribution functions of beta random variables, such as the
\code{dbeta()} and \code{pbeta()} R functions. From~(\ref{eq:score-xbx}),
the gradient of the approximate log-likelihood for XBX regression is
also readily available, in light of expressions for the contributions
$\nabla \log L_i(\beta, \gamma, \nu Q_t)$ to the log-likelihood
gradient of the corresponding regression model based on the XB
distribution. A tedious but straightforward calculation gives that the
gradient of $\log L_i(\beta, \gamma, u)$ with respect to $\beta$ is
\begin{equation}
  \label{eq:score-xbx-beta}
  \nabla_\beta \log L_i(\beta, \gamma, u) = \left\{
  \begin{array}{ll}
    \displaystyle x_i^\top \phi_i d_{1, i} \frac{Q_{0, i} - R_{0, i}}{F_{0, i}} \,, & \text{if } y = 0 \\[1em]
    \displaystyle x_i^\top \phi_i d_{1, i} (\bar{T}_i - \bar{U}_i) \,, & \text{if } y \in (0, 1) \\[1em]
    \displaystyle x_i^\top \phi_i d_{1, i} \frac{R_{1, i} - Q_{1, i}}{1 - F_{1, i}}  \,, & \text{if } y = 1
  \end{array}\right.\, ,
\end{equation}
the gradient with respect to $\gamma$ is 
\begin{equation}
  \label{eq:score-xbx-gamma}
  \nabla_\gamma \log L_i(\beta, \gamma, u) = \left\{
  \begin{array}{ll}
    \displaystyle z_i^\top d_{2, i} \frac{\mu_i Q_{0, i} + (1 - \mu_i)R_{0, i}}{F_{0, i}} \,, & \text{if } y = 0 \\[1em]
    \displaystyle z_i^\top d_{2, i} \{\mu_i (\bar{T}_i - \bar{U}_i) + \bar{U}_i\} \,, & \text{if } y \in (0, 1) \\[1em]
    \displaystyle z_i^\top d_{2, i} \frac{\mu_i Q_{1, i} + (1 - \mu_i)R_{1, i}}{1 - F_{1, i}} \,, & \text{if } y = 1
  \end{array}\right.\, ,
\end{equation}
and that the partial derivative with respect to $u$ is 
\begin{equation}
  \label{eq:score-xbx-u}
  \frac{\partial}{\partial u} \log L_i(\beta, \gamma, u) = \left\{
  \begin{array}{ll}
    \displaystyle \frac{f_{0, i}}{(1 + 2u)^2F_{0, i}} \,, & \text{if } y = 0 \\[1em]
    \displaystyle \frac{\mu_i \phi_i - 1}{y_i + u} + \frac{(1 - \mu_i) \phi_i - 1}{1 - y_i + u} - \frac{2 (\phi_i - 1)}{1 + 2u} \,, & \text{if } y \in (0, 1) \\[1em]
    \displaystyle \frac{f_{1, i}}{(1 + 2u)^2(1 - F_{1, i})} \,, & \text{if } y = 1
  \end{array}\right.\, .
\end{equation}
In expressions~(\ref{eq:score-xbx-beta}),~(\ref{eq:score-xbx-gamma}),
and~(\ref{eq:score-xbx-u}), $d_{1, i} = d \mu_i / d \eta_i$,
$d_{2, i} = d \phi_i / d \zeta_i$, and 
\[
  f_{j, i} = f_{(b)}(z(j, u) \mid \mu_i, \phi_i) \quad \text{and} \quad F_{j, i} = F_{(b)}(z(j, u) \mid \mu_i, \phi_i) \,,
\]
with $z(j, u) = (j + u)/(1 + 2u)$. In addition,
\[
  \bar{T}_i = \log Y_i - \psi(\phi\mu_i) + \psi(\phi_i) \quad \text{and} \quad
  \bar{U}_i = \log (1 - Y_i) - \psi(\phi(1 - \mu_i)) + \psi(\phi_i) \, ,
\]
$Q_{j, i} = q(z(j, u) \mid \mu_i, \phi_i)$, and $R_{j, i} = R(z(j, u) \mid \mu_i, \phi_i)$ with
\begin{align*}
  q(z \mid \mu, \phi) & = F_{(B)}(z \mid \mu, \phi)\left\{
    \psi(\phi) - \psi(\phi\mu) + \log z \right\} - h(z
                          \mid \mu, \phi)\,, \\
  r(z \mid \mu, \phi) & = F_{(B)}(1 - z \mid 1 - \mu, \phi)\left\{
    \psi(\phi(1- \mu)) - \psi(\phi) - \log(1 - z) \right\}
  - h(1 - z
  \mid 1 - \mu, \phi)\, ,
\end{align*}
and
\[
  h(z, \mu, \phi) = \frac{z^{\mu\phi}}{\mu^2\phi^2B(\mu\phi, (1 - \mu)\phi)} {_3}F_{2}(\mu\phi, \mu\phi, 1 - (1 -\mu)\phi \,;\,
  \mu\phi + 1, \mu\phi + 1 \,;\, z) \, ,
\]
where $\psi(\cdot)$ is the digamma function, and
${}_3F_{2}(\cdot, \cdot, \cdot \,;\, \cdot, \cdot)$ is the generalized
hypergeometric function.

Maximum approximate likelihood estimates can be computed using either
gradient-free optimization algorithms or optimization algorithms that
operate with numerical or analytical gradient
approximations. Estimated standard errors can be computed as the
diagonal of the inverse of the negative hessian of the approximate
log-likelihood, which is the observed information matrix, and can be
accurately approximated using numerical differentiation procedures
on~(\ref{eq:loglik-xbx}) or~(\ref{eq:score-xbx}).

\subsection{Inference}

Hypothesis tests and confidence sets can be constructed using standard
tools for likelihood inference. For example, letting
$\theta = (\beta^\top, \gamma^\top)^\top$ and $I^{(a)}(\hat\theta)$ be
the negative hessian of the approximate
log-likelihood~(\ref{eq:loglik-xbx}), the hypothesis $R \theta = b$
for a known $c \times (p + q)$ martix $R$ can be tested by comparing
the approximate likelihood ratio statistic
\[
  2 \left \{\ell^{(a)}(\hat\beta, \hat\gamma, \hat\xi) - \max_{R\theta = b} \ell^{(a)}(\beta, \gamma, \xi)\right\} \,,
\]
or the Wald statistic
\[
  (R\hat\theta - b)^\top \{ R I^{(a)}(\hat\theta)^{-1} R^\top \}^{-1} (R\hat\theta - b) \,,
\]
to the quantiles of their asymptotic distribution, that is $\chi^2$
with $c$ degrees of freedom.

\section{Application: Loss aversion in investment decisions}
\label{sec:lossaversion}

\subsection{Motivation}

To illustrate the benefits of XBX regression models, we revisit the
analysis of a behavioral economics experiment conducted and published
by \cite{glaetzleruetzler+sutter+zeileis:2015}.  The outcome variable
is the proportion of tokens invested by high-school students in a
risky lottery with positive expected
payouts. \cite{glaetzleruetzler+sutter+zeileis:2015} focused on the
effects of several experimental factors on the mean investments, which
reflect the players' willingness to take risks. The latter study
employed linear regression models, estimated by ordinary least squares
(OLS) with standard errors adjusted for potential clustering and
heteroscedasticity.

Here, we extend the analysis from
\cite{glaetzleruetzler+sutter+zeileis:2015} by employing a similar
model for the mean investments but exploring distributional
specifications that allow for a probabilistic, rather than mean-only,
interpretation of the effects.  From an economic perspective this is
of interest because it allows to interpret both the mean willingness
to take risks in this experiment, and the probability to behave like a
rational \emph{Homo oeconomicus}, who would invest (almost) all tokens
in this lottery because it has positive expected payouts.

\subsection{Data}
\label{sec:lossaversion-data}

The experiment was designed by Matthias Sutter and Daniela
Gl\"atzle-R\"utzler at Universit\"at Innsbruck to study a phenomenon
known as myopic loss aversion \citep{benartzi+thaler:1995}.
The classical setup for demonstrating this is a lottery with
positive expected payouts where individuals typically invest less than
the maximum possible (loss aversion), and where investments further
decrease if the individuals can make (many) short-term investment decisions
(myopia or short-sightedness) as opposed to (fewer) long-term
decisions.

The experiment involved high-school students from Schwaz and Innsbruck
in Tyrol, Austria, who participated as individuals or as teams of two
students. The subjects, consisting of 385~individuals and 185~teams of
two, could invest $X \in \{0, \dots, 100\}$ tokens in each of nine
rounds in a lottery. The payouts were $100 + 5 X / 2$ tokens with
probability $1/3$ and $100 - X$ tokens with probability $2/3$. Thus,
the expected payouts were $100 + X / 6$ tokens, and, as a result, a
\emph{Homo oeconomicus} would always invest the maximum of $X = 100$
tokens in each round. It was observed that 1.5\% of the subjects never
invested anything, and 5.3\% always invested everything.

Interest is in linking the proportion $Y_i \in [0, 1]$ of the total
tokens invested by the $i$th subject in the nine rounds out of the
$900$ available tokens to explanatory information. The explanatory
information we consider here consists of the subject characteristics
and their interactions that
\cite{glaetzleruetzler+sutter+zeileis:2015} found to be relevant in
the experiment. These are \emph{Grade} ($G_{i}$ with value 0 and 1,
according to whether the the $i$th subject is from lower grades 6--8
or upper grades 10--12, respectively), \emph{Arrangement} ($T_{i}$,
with values 0 and 1, according to whether the $i$th subject is an
individual or a team of two, respectively), \emph{Age} ($A_{i}$ the age of
the player or the average age of the team of players in years),
\emph{Sex} ($S_i$, with value 1 if the player was male, or the team
included at least one male student, and 0, otherwise), and the
interactions of \emph{Grade} with \emph{Arrangement} and \emph{Age}.

\subsection{Models}
\label{sec:la-models}

We compare four different models for the proportion of the total
tokens invested. All models assume that $Y_1, \ldots, Y_n$ $(n = 570)$
are independent conditionally on the covariate information, and that
the mean parameter for the distribution of $Y_i$ is linked to a linear
predictor of the form
\begin{equation}
  \label{eq:la_mean}
  \eta_i = \beta_1 + \beta_2 G_i + \beta_3 T_i + \beta_4 A_i + \beta_5 S_i + \beta_5 G_iT_i + \beta_7 G_i A_i \, .
\end{equation}
The first model is a normal linear regression model (N) under which $Y_i$ has a
normal distribution with mean $\mu_i = \eta_i$ and variance
$\sigma_i = \sigma^2$.  This model corresponds to the OLS approach
used by \cite{glaetzleruetzler+sutter+zeileis:2015}. The second model
we consider is the heteroscedastic two-limit tobit model, which is
based on the censored normal distribution (CN) and described in
Section~\ref{sec:boundary}, and where the latent variable $Y_i^*$ has
mean $\mu_i = \eta_i$ and variance $\sigma_i$ satisfying
$\log(\sigma_i) = \zeta_i$ with
\begin{equation}
  \label{eq:la_var}
  \zeta_i = \gamma_1 + \gamma_2 G_i + \gamma_3 T_i + \gamma_4 S_i \, .
\end{equation}
For simplicity, the regression structure~(\ref{eq:la_var}) involves
only main effects, excluding $A_i$ because $G_i$ already captures the
main age differences between the lower and upper grades. The third
model we consider is the beta regression model (B) after adjusting the
responses using the ad-hoc scaling of \cite{smithson+verkuilen:2006}
(as described in Section~\ref{sec:boundary}) with mean $\mu_i$
satisfying $\log \{\mu_i / (1 - \mu_i) \} = \eta_i$, with $\eta_i$ as
in~(\ref{eq:la_mean}), and precision parameter $\phi_i$ satisfying
$\log \phi_i = \zeta_i$, with $\zeta_i$ as
in~(\ref{eq:la_var}). Finally, we also consider the extended-support
beta mixture model (XBX) where the mean and precision parameters
$\mu_i$ and $\phi_i$ (see~(\ref{eq:b4mix-mean-var})), respectively,
have the same specification as in the beta regression model.

\subsection{Estimates}

Coefficient estimates (and estimated standard errors) along with
log-likelihood, Akaike and Bayes information criteria (AIC and BIC)
for all four models are provided in
Table~\ref{tab:lossaversion-gaussian} and
Table~\ref{tab:lossaversion-beta}, respectively. Separate tables are
used because the coefficients from N and CN use an identity link for
the mean parameter whereas B and XBX use a logit link. In addition,
the log-likelihood, and, hence, AIC and BIC, are comparable only
between CN and XBX because those two models have the same support for
the response variable, that is the unit interval with point masses at
0 and 1.

\begin{table}[t!]
  \caption{Coefficient estimates, estimated standard errors (in
    parentheses), maximized log-likelihood, AIC, and BIC for the
    normal model (N) and the censored normal model (CN) of
    Section~\ref{sec:la-models}.}
\sisetup{input-open-uncertainty = ,
         input-close-uncertainty = ,
         table-align-text-pre = false,
         table-align-text-post = false}
\begin{center}
\begin{tabular*}{\textwidth}{l@{\extracolsep{\fill}}*{4}{S[table-format=-1.3]@{\,}}}
\toprule
Covariate                    & \multicolumn{2}{c}{N}  & \multicolumn{2}{c}{CN} \\ \cmidrule{2-3} \cmidrule{4-5} 
                             & \multicolumn{1}{c}{$\mu_i$}       & \multicolumn{1}{c}{$\sigma_i$}   & \multicolumn{1}{c}{$\mu_i$}      & \multicolumn{1}{c}{$\log(\sigma_i)$} \\ \midrule
(Intercept)                  &  0.284  & 0.246 &  0.250  &  -1.467 \\
                            & (0.157) &             & (0.154) & (0.047) \\ \addlinespace 
$G_i$ (Grade)                &  -0.844 &             & -0.982  &  0.167  \\ 
                            & (0.281) &             & (0.317) & (0.066) \\ \addlinespace 
$T_i$ (Arrangement)          &  0.063  &             &  0.065  &  -0.138 \\
                            & (0.030) &             & (0.029) & (0.072) \\ \addlinespace 
$A_i$ (Age)                  &  0.012  &             &  0.014  &  \\
                            & (0.012) &             & (0.012) & \\ \addlinespace 
$S_i$ (Sex)                  &  0.104  &             &  0.113  &  0.212 \\
                            & (0.023) &             & (0.025) & (0.068) \\ \addlinespace 
$G_iT_i$ (Grade $\times$ Arrangement)   &  0.151  &             &  0.167  & \\
                            & (0.045) &             & (0.050) & \\ \addlinespace 
$G_iA_i$ (Grade $\times$ Age)                    &  0.046  &             &  0.054  &               \\
                            & (0.019) &             & (0.020) &               \\ \midrule
Log-likelihood               & \multicolumn{2}{r}{$-8.8$}  & \multicolumn{2}{r}{$-75.5$}      \\
AIC                          & \multicolumn{2}{r}{$33.6$}  & \multicolumn{2}{r}{$173.0$}      \\
BIC                          & \multicolumn{2}{r}{$68.3$}  & \multicolumn{2}{r}{$220.8$}      \\ \bottomrule
\end{tabular*}
\end{center}
\label{tab:lossaversion-gaussian}
\end{table}

The coefficient estimates (and the estimated standard errors) in the
$\mu_i$ regression structure for the N and CN models are similar, with
the ones for N being slightly smaller than those for CN. This
resembles the attenuation bias phenomenon when estimating regression
parameters using least squares in the presence of bounded responses
with boundary observations \citep[see, for
example,][Section~7.2]{winkelmann+boes:2006}.

\begin{table}[t!]
  \caption{Coefficient estimates, estimated standard errors (in
    parentheses), maximized log-likelihood, AIC, and BIC for the beta
    model (B) and the extended beta mixture model (XBX) of
    Section~\ref{sec:la-models}.}
\sisetup{input-open-uncertainty = ,
         input-close-uncertainty = ,
         table-align-text-pre = false,
         table-align-text-post = false}
\begin{center}
\begin{tabular*}{\textwidth}{l@{\extracolsep{\fill}}*{5}{S[table-format=-1.3]@{\,}}}
\toprule
Covariate & \multicolumn{2}{c}{B} & \multicolumn{3}{c}{XBX} \\ \cmidrule{2-3} \cmidrule{4-6}
          & \multicolumn{1}{c}{$\text{logit}(\mu_i)$} & \multicolumn{1}{c}{$\log(\phi_i)$}  & \multicolumn{1}{c}{$\text{logit}(\mu_i)$} & \multicolumn{1}{c}{$\log(\phi_i)$} & \multicolumn{1}{c}{$\log(\nu)$} \\ \midrule
  (Intercept)                  &  -1.414 &  1.194  &  -0.865 & 1.756   &  -2.273 \\
                             & (0.620) & (0.084) & (0.519) & (0.128) & (0.245) \\  \addlinespace 
$G_i$ (Grade)               &  -2.943 &  -0.553 &  -3.096 &  -0.316 & \\
                             & (1.252) & (0.104) & (1.053) & (0.131) & \\  \addlinespace 
$T_i$ (Arrangement)            &  0.225  &  0.406  &  0.208  &  0.325  & \\
                             & (0.117) & (0.122) & (0.099) & (0.145) & \\  \addlinespace 
$A_i$ (Age)                  &  0.091  &               &  0.049  & & \\
                             & (0.049) &               & (0.041) & &  \\  \addlinespace 
$S_i$ (Sex)                  &  0.455  &  -0.555 &  0.379  & -0.484  & \\
                             & (0.099) & (0.113) & (0.084) & (0.136) & \\  \addlinespace 
$G_iT_i$ (Grade $\times$ Arrangement) &  0.655  &               &  0.567  &  & \\
                             & (0.200) &               & (0.169) &  & \\  \addlinespace 
$G_iA_i$ (Grade $\times$ Age)  &  0.151  &               &  0.169  &  & \\
                             & (0.081) &               & (0.068) &  & \\ \midrule
Log-likelihood               & \multicolumn{2}{r}{$94.4$}     & \multicolumn{3}{r}{$-71.8$}  \\
AIC                          & \multicolumn{2}{r}{$-166.8$}  & \multicolumn{3}{r}{$167.7$}  \\
BIC                          & \multicolumn{2}{r}{$-119.0$} & \multicolumn{3}{r}{$219.8$}  \\ \bottomrule
\end{tabular*}
\end{center}
\label{tab:lossaversion-beta}
\end{table}

Similarly, the coefficient estimates in the $\log \{ \mu_i / (1 - \mu_i) \}$
regression structure for the B and XBX are comparable and lead to
similar qualitative interpretations. The estimates for XBX (and their
estimated standard errors) are slightly smaller than for B. The reason
is that the estimated effects in B are slightly exaggerated to push
fitted values towards the boundaries, while the extra parameter $\nu$,
estimated to be $0.1$, in XBX offers more flexibility in capturing the
boundary observations.

\begin{figure}[t!]
  \begin{center}
    \includegraphics[width = \textwidth]{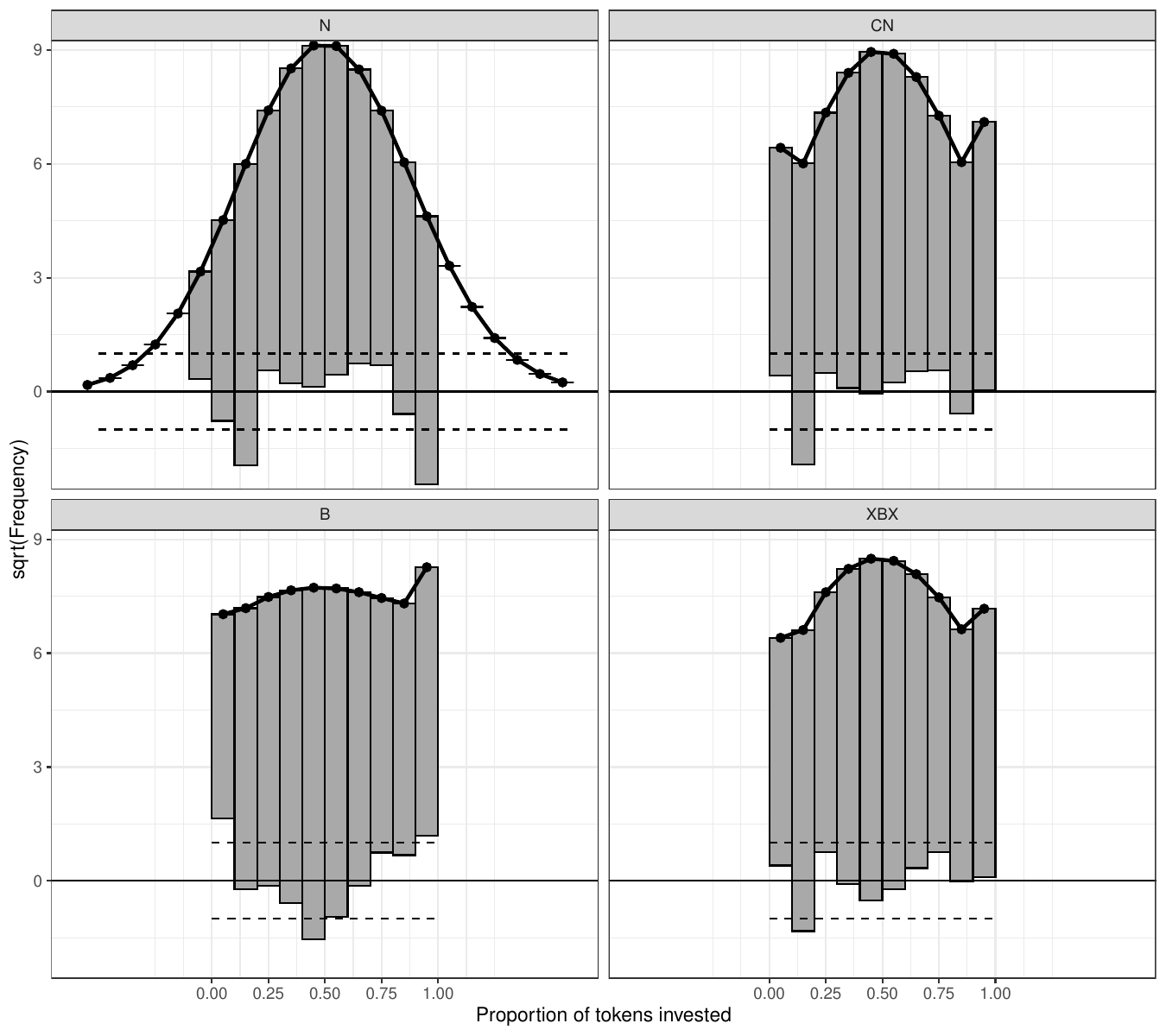}
  \end{center}
  \caption{Hanging rootograms for the fitted models in
    Table~\ref{tab:lossaversion-gaussian} and
    Table~\ref{tab:lossaversion-beta}. The rootograms compare the
    histogram estimate of the marginal distribution of the proportions
    of tokens invested with the corresponding expected frequencies
    from the fitted models. The square root of the expected
    frequencies are shown as dots connected by a solid line (in black) and the square root of the
    observed frequencies are hanging from the points as grey bars. The
    quality of the fit is determined by checking whether the bars hang
    down to the horizontal line at zero or whether they are clearly
    too large or too small. The dashed horizontal lines are the Tukey warning
    limits at $\pm 1$ \citep[see][for details]{kleiber+zeileis:2016}.}
  \label{fig:lossaversion-rootogram}
\end{figure}

\subsection{Assessment of fitted models}
\label{sec:la-fit}
The differences between the models become more apparent when
considering the fitted
distributions. Figure~\ref{fig:lossaversion-rootogram} shows hanging
rootograms \citep[see][for details]{kleiber+zeileis:2016} for the
fitted models in Table~\ref{tab:lossaversion-gaussian} and
Table~\ref{tab:lossaversion-beta}. The rootograms compare the
empirical marginal distributions of the proportion of tokens invested
to the implied distributions from the fitted models. The quality of
the fit is determined by checking whether the bars hang down to the
horizontal line at zero or whether they are clearly too large or too
small. As is apparent, models N and B fit poorly in the tails. For N
the support of the fitted distributions is the real line and, as a
result, the expected frequencies in intervals close to $0$ or $1$ are
larger and smaller, respectively, than what is observed. The
corresponding expected frequencies for B are larger than what is
observed, while the expected frequencies at intervals around 0.5 are
too small compared to what is observed. 
In contrast, models CN and
XBX fit very well with all but the second bar from the left hanging
close to the zero reference line and within the Tukey limits. The
misfit of the second bar is larger for CN compared to XBX, which is
also reflected by both AIC and BIC being smaller for XBX; see
Table~\ref{tab:lossaversion-gaussian} and
Table~\ref{tab:lossaversion-beta}.

Testing whether the removal of dispersion effects
$\log(\phi) = \zeta_i$ in model XBX, with $\zeta_i$ as
in~(\ref{eq:la_var}), is worthwhile returns an approximate likelihood
ratio statistic of $20.614$ and a Wald statistic of $21.251$ on $3$
degrees of freedom, which is strong evidence against the hypothesis
that $\log(\phi) = \gamma_1$. We get the same qualitative conclusions
from model CN about the removal of variance effects
$\log(\sigma) = \zeta_i$, with a likelihood ratio statistic of
$19.620$ and a Wald statistic of $19.243$ on $3$ degrees of freedom.

\subsection{Effects}

Models XBX and CN allow more economically relevant interpretations of
probability effects than the N and B models. For illustration, we
focus on the team arrangement effect for a subsample with a large
share of very rational subjects: male players or teams with at least
one male, in grades~10--12, and between 15 and 17 years of age. About
$52\%$ of the subjects in that subsample invested more than half of
their total tokens.

\begin{table}[t!]
  \caption{Estimated arrangement effects for subjects that are male or
    teams with at least one male, in grade~10--12 and 16 years of
    age. The rows for $E(Y)$ show model-based estimates of the
    expected proportion of tokens invested, and the rows for
    $P(Y > 0.95)$ show model-based estimates of the probability to
    behave very rationally by investing more than $95\%$ of the
    tokens. All effects are contrasted to the corresponding empirical
    quantities obtained from the subsample of subjects that are male
    or teams with at least one male, in grades~10--12, and between 15
    and 17~years of age. The rows for the parameters show the
    estimates of the distributional parameters for each arrangement
    setting.}
  \sisetup{input-open-uncertainty = , input-close-uncertainty = ,
    table-align-text-pre = false, table-align-text-post = false}
\begin{center}
\begin{tabular*}{\textwidth}{l@{\extracolsep{\fill}} l@{\extracolsep{\fill}} *{1}{S[table-format=-1.3]@{\,}} l@{\extracolsep{\fill}} *{2}{S[table-format=-1.3]@{\,}} l@{\extracolsep{\fill}} *{2}{S[table-format=-1.3]@{\,}}}
  \toprule
 & \multicolumn{1}{l}{Arrangement} & \multicolumn{1}{c}{Empirical}   & &  \multicolumn{1}{c}{N}  &  \multicolumn{1}{c}{CN}  & &  \multicolumn{1}{c}{B}  &  \multicolumn{1}{c}{XBX}  \\ \midrule
\multirow{2}{*}{$E(Y)$} & Individual        & 0.482 &  &    0.461 &      0.470 &  &  0.492 &    0.471 \\
 & Team              & 0.665 &  &    0.675 &      0.674 &  &  0.700 &    0.686 \\ \midrule
\multirow{2}{*}{$P(Y > 0.95)$} & Individual        & 0.087 &  &    0.023 &      0.075 &  &  0.125 &    0.072 \\
 & Team              & 0.207 &  &    0.131 &      0.194 &  &  0.249 &    0.185 \\ \midrule
  \multirow{6}{*}{Parameters} & Individual  & &     $\mu$     &  0.461 & 0.465 & $\mu$  & 0.492 &  0.475 \\
 &             &&     $\sigma$  &  0.246 & 0.337 & $\phi$ & 1.091 &  2.602 \\
 &            &&               &        &       & $\nu$  &       &  0.103 \\ \cmidrule{2-9}
 & Team       &  &     $\mu$     &  0.675 & 0.696 & $\mu$  & 0.700 &  0.662 \\
 &            &&     $\sigma$  &  0.246 & 0.294 & $\phi$ & 1.637 &  3.602 \\
 &            &&               &        &       & $\nu$  &       &  0.103 \\ \bottomrule 
\end{tabular*}
\end{center}
\label{tab:lossaversion-effects}
\end{table}

\begin{figure}[t!]
  \begin{center}
    \includegraphics[width = \textwidth]{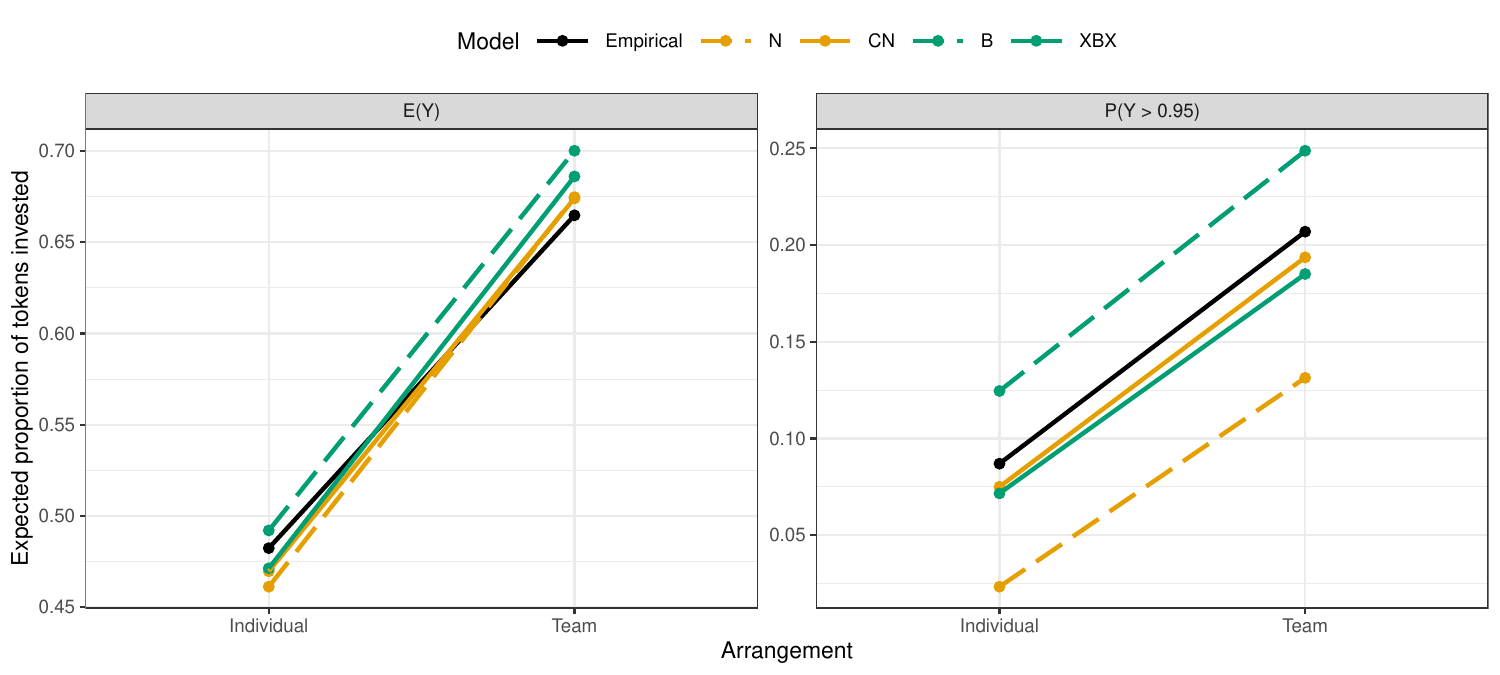}
  \end{center}
  \caption{Estimated arrangement effects in Table~\ref{tab:lossaversion-effects}.}
  \label{fig:lossaversion-meanprob}
\end{figure}

Table~\ref{tab:lossaversion-effects} shows the model-based estimates
of the expected proportion of tokens invested ($E(Y)$) and the
probability to behave very rationally by investing more than $95\%$
($P(Y > 0.95)$) of the tokens, contrasted to the corresponding
empirical quantities from the subsample with age between 15 and 17
years. The mean and probability effects are also visualized in
Figure~\ref{fig:lossaversion-meanprob}; see, also,
Figure~\ref{fig:lossaversion-cdf} in the supplementary material
document for the fitted cumulative distribution functions. All models
do a reasonable job in estimating $E(Y)$, with the largest deviations
observed for model B. Nevertheless, the censored models XBX and CN
estimate the probabilities to behave very rationally much better than
B and N do. Model~N, due to its unlimited support, underestimates the
variance and, hence, underestimates that probability, while model~B
with limited support and no point masses overestimates that
probability. As is apparent using a model with a good fit is essential
for estimating probability effects, especially so at the tails of
bounded domains. Table~\ref{tab:lossaversion-left-prob} in the
supplementary material document shows the model-based estimates for
$P(Y < 0.05)$, which, albeit not being as important as $P(Y > 0.95)$
in terms of its economic interpretation, further illustrates the
quality of the estimated probability effects from XBX and CN at the
tails.

\subsection{Three-part hurdle models}
\label{sec:la-hurdle}

In our analysis, we have not considered the three-part hurdle model of
\citet{ospina+ferrari:2010} (``zero-and-one-inflated'' beta) for two
reasons: (a) from an economic point of view, it is not plausible that
the effects driving ``fully rational'' behavior ($Y = 1$) are
different from the effects driving ``very rational'' behavior (say,
$Y < 0.95$ or $Y = 1$); (b) the hurdle needs substantially more
parameters than XBX due to the regression structures for modelling the
probability of boundary and non-boundary observations, whose
estimation may also not be reliable. For example, for the loss
aversion data, employing a multinomial logistic regression model for
the probabilities of boundary and non-boundary observations with the
same covariates as those in the linear predictor~(\ref{eq:la_mean})
or~(\ref{eq:la_var}), and model B for the non-boundary observations,
results in a three-part hurdle model with 25 or 19 parameters,
respectively, instead of the 12 parameters that XBX requires. Notably,
maximum likelihood estimation results in infinite estimates for the
parameters of either multinomial logistic regression model (see
Table~\ref{tab:lossaversion-beta01-1} and
Table~\ref{tab:lossaversion-beta01-2} in the supplementary material
document for estimates and estimated standard errors) due to data
separation.  Also, the BIC values from both three-part hurdle model
are larger than those of the CN and XBX models.

\section{XBX versus heteroscedastic two-limit tobit regression}
\label{sec:xbx-vs-htobit}

In the application of Section~\ref{sec:lossaversion}, XBX regression
resulted in a slightly better fit than the CN model. In this section,
we carry out a numerical study to compare the performance of XBX and
CN regression in a more systematic way.

Theorem~\ref{th:beta4} shows that regression models based on the XB
distribution allow to parameterize the transition from the beta
distribution to the censored normal distribution with a single
parameter $u$. On the other hand, XBX regression is a continuous
mixture of XB distributions that shrinks towards a beta regression
model. Hence, a comparison of XBX regression with heteroscedastic
two-limit tobit regression can be based on data sets from the XB
distributions. For a given
$(\beta_1, \beta_2, \gamma_1, \gamma_2, u)^\top$, we let
$\log \{ \mu_i / (1 - \mu_i) \} = \beta_1 + \beta_2 x_i$ and
$\log \phi_i = \gamma_1 + \gamma_2 x_i$, where
$x_i = 2 (i - 1) / (n - 1) - 1 \in [-1, 1]$ $(i = 1, \ldots, n)$, and
simulate response values $y_1, \ldots, y_n, y_1', \ldots, y_n'$ as
realizations of independent random variables
$Y_1, \ldots, Y_n, Y_1', \ldots, Y_n'$ with
$Y_i \sim {\rm XB}(\mu_i, \phi_i, u)$, and
$Y_i' \sim {\rm XB}(\mu_i, \phi_i, u)$. We then fit the XBX and CN
models on $y_1, \ldots, y_n$ with covariates $x_1, \ldots, x_n$ in
both mean and variance and precision regression specifications. The
predictive performance of each model is evaluated as
\begin{equation}
  \label{eq:sum-crps}
  S = \sum_{i = 1}^n C(\hat{F}_i, y_i')\,,
\end{equation}
where $\hat{F}_i(\cdot)$ is the model's cumulative distribution
function at $x_i$ at the parameter estimates from $y_1, \ldots, y_n$,
and $C(F, z) = \int_\Re \left\{ F(t) - I(t \ge z)\right\}^2dt$ is the
continuous ranked probability score \citep{gneiting+raftery:2007},
which is a strictly proper scoring rule.

\begin{table}
  \caption{Intervals $[\mu_1, \mu_n]$ and $[\phi_1, \phi_n]$ considered in the simulation study of Section~\ref{sec:xbx-vs-htobit}, and the implied values of $\beta_1$, $\beta_2$, $\gamma_1$ and $\gamma_2$ (in three decimal places).}
  \begin{minipage}[t]{.5\linewidth}
    \vspace{0pt}
    \begin{center}
      \begin{tabular}{rrrr}
        \toprule
        \multicolumn{1}{c}{$\mu_1$} & \multicolumn{1}{c}{$\mu_n$} & \multicolumn{1}{c}{$\beta_1$} & \multicolumn{1}{c}{$\beta_2$} \\
        \midrule
        $0.05$ & $0.25$ & $-2.022$ & $0.923$ \\
        $0.05$ & $0.50$ & $-1.472$ & $1.472$ \\
        $0.05$ & $0.75$ & $-0.923$ & $2.022$ \\
        $0.05$ & $0.95$ & $-0.000$ & $2.944$ \\
        $0.25$ & $0.50$ & $-0.549$ & $0.549$ \\
        $0.25$ & $0.75$ &  $0.000$ & $1.099$ \\
        $0.25$ & $0.95$ &  $0.923$ & $2.022$ \\
        $0.50$ & $0.75$ &  $0.549$ & $0.549$ \\
        $0.50$ & $0.95$ &  $1.472$ & $1.472$ \\
        $0.75$ & $0.95$ &  $2.022$ & $0.923$ \\
        \bottomrule
      \end{tabular}
    \end{center}
  \end{minipage}
  \begin{minipage}[t]{.5\linewidth}
    \vspace{0pt}
    \begin{center}
      \begin{tabular}{rrrr}
        \toprule
        \multicolumn{1}{c}{$\phi_1$} & \multicolumn{1}{c}{$\phi_n$} & \multicolumn{1}{c}{$\gamma_1$} & \multicolumn{1}{c}{$\gamma_2$} \\
        \midrule
         $0.5$ &  $10.0$ & $0.805$ & $1.498$ \\
         $0.5$ &  $20.0$ & $1.151$ & $1.844$ \\
         $0.5$ &  $50.0$ & $1.609$ & $2.303$ \\
         $5.0$ & $100.0$ & $3.107$ & $1.498$ \\
        $10.0$ &  $20.0$ & $2.649$ & $0.347$ \\
        $20.0$ &  $50.0$ & $3.454$ & $0.458$ \\
        $50.0$ & $100.0$ & $4.259$ & $0.347$ \\
        \bottomrule
      \end{tabular}
    \end{center}
  \end{minipage}
  \label{tab:experimental-settings}
\end{table}

Because $x_1, \ldots, x_n$ define an equispaced grid of size $n$ on
$[-1, 1]$, the parameters $\beta_1, \beta_2, \gamma_1, \gamma_2$ are
uniquely determined by $\mu_1$, $\mu_n$, $\phi_1$, and $\phi_n$. Also,
for $\mu_1 \le \mu_n$ and $\phi_1 \le \phi_n$, it necessarily holds
that $\mu_i \in [\mu_1, \mu_n]$ and $\phi_i \in [\phi_1,
\phi_n]$. Table~\ref{tab:experimental-settings} shows the intervals
$[\mu_1, \mu_n]$ and $[\phi_1, \phi_n]$ we consider in the simulation
experiment, along with the implied parameter values. The intervals
cover narrow and wide ranges for $\mu$, being either close to the
boundary values or around 0.5, and narrow and wide intervals for
$\phi_i$, with both low and high values of the precision parameters.

We set $n = 500$. In order to avoid trivial response configurations,
from the possible combinations of
$u \in \{2^{-6}, 2^{-5}, \ldots, 2\}$, and $[\mu_1, \mu_n]$ and
$[\phi_1, \phi_n]$ in Table~\ref{tab:experimental-settings}, we
consider those for which the average probability of non-boundary
response values in $(0, 1)$ is more than $0.05$.  We then simulate
$100$ independent vectors $(y_1, \ldots, y_n, y_1', \ldots, y_n')$ per
parameter setting, and compute the relative change in $S$,
$S_{\rm CN} / S_{\rm XBX} - 1$, when moving from the XBX to the CN
model, based on their maximum likelihood fits. Positive values of the
relative change correspond to $S_{\rm CN} > S_{\rm XBX}$, and, thus,
worse predictive performance of the CN model relative to the XBX
regression model.

Figure~\ref{fig:xbx-vs-htobit} shows the average relative change for a
subset of the parameter settings we consider; see,
Figure~\ref{fig-supp:xbx-vs-htobit} in the supplementary material
document, for all settings. The XBX model is performing the same or
outperforms the CN model in the majority of settings. CN is found to
perform slightly better when the response distributions have small
variance (large values of $\phi$) for moderate values of $u$. Those
settings produce datasets where the relationship between the
non-boundary responses and covariates is well-modelled as linear; see
Figure~\ref{fig-supp:xb-data-1} to Figure~\ref{fig-supp:xb-data-4} in
the supplementary material document, which show simulated data sets
for all parameter settings. The two models start performing similarly
again for large $u$, when the response values start concentrating on
the boundaries.

\begin{figure}[t!]
  \begin{center}
    \includegraphics[width = \textwidth]{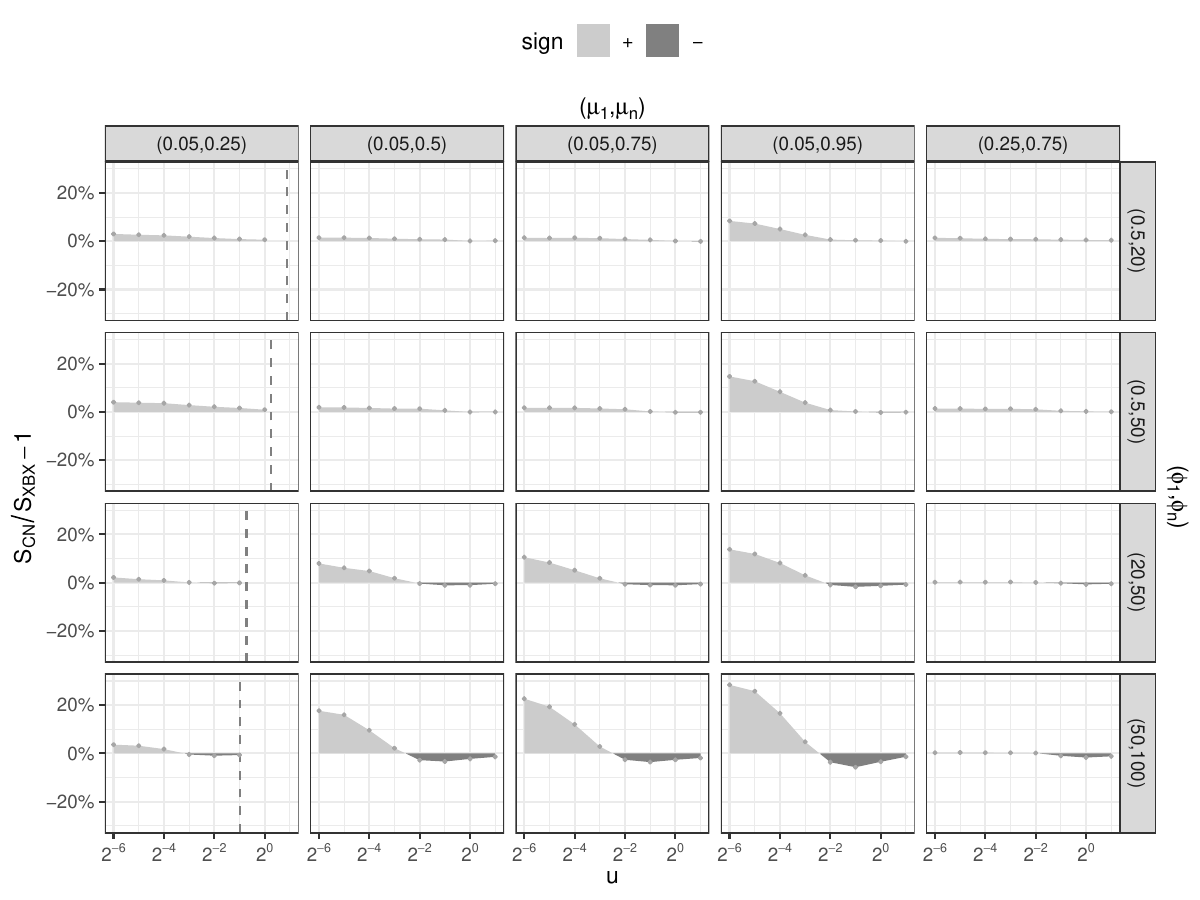}
  \end{center}
  \caption{Average relative change in $S$ in~(\ref{eq:sum-crps})
    ($S_{\rm CN} / S_{\rm XBX} - 1$) when moving from XBX to CN, based
    on $100$~samples of size $n = 500$ for each combination of
    $u \in \{2^{-6}, 2^{-5}, \ldots, 2\}$, and a subset of
    $[\mu_1, \mu_n]$ (columns) and $[\phi_1, \phi_n]$ (rows) intervals
    in Table~\ref{tab:experimental-settings}. Positive values of the
    relative change correspond to $S_{\rm CN} > S_{\rm XBX}$, and,
    thus, worse predictive performance of the CN model relative to the
    XBX regression model. The dashed vertical lines are the smallest
    values of $u$ for which the average probability of boundary
    observations exceeds $0.95$ for each combination of
    $[\mu_1, \mu_n]$ and $[\phi_1, \phi_n]$.}
    \label{fig:xbx-vs-htobit}
  \end{figure}

  The experimental setting in the current section has been used to
  compare beta regression on rescaled responses according to the
  proposal in \citet{smithson+verkuilen:2006} and the normal linear
  regression model. Figure~\ref{fig-supp:xbx-vs-beta} and
  Figure~\ref{fig-supp:xbx-vs-ols} in the
  Section~\ref{sec:XBX-vs-beta-ols} of the supplementary material
  document show the average relative change in $S$ when moving from
  XBX regression to each of those models as a function of $u$ for all
  parameter settings in Table~\ref{tab:experimental-settings}. In both
  cases XBX regression performs best and increasingly better as $u$
  increases.

\section{Concluding remarks}
\label{sec:conculding}

We have introduced XBX regression for modelling bounded responses with
or without boundary observations. At the core of the XBX regression
model is the XB distribution, which is a censored version of a
four-parameter beta distribution with the same exceedance $u$ on the
left and right of $(0, 1)$. We prove that both beta regression with
dispersion effects and the heteroscedastic two-limit tobit model are
special cases of XB regression, depending on whether $u$ is zero or
infinity, respectively. To avoid identifiability issues that may arise
in estimating $u$ due to the similarity of the beta and normal
distribution for specific parameter settings that result in low
skewness and kurtosis, we define XBX regression as a shrunken version
of XB regression towards beta regression, by letting $u$ have an
exponential distribution with mean $\nu$. The associated marginal
likelihood can be conveniently and accurately approximated using a
Gauss-Laguerre quadrature rule, resulting in efficient estimation and
inference procedures. In Section~\ref{sec:xbx-vs-htobit}, we produce
evidence that XBX regression can substantially outperform
heteroscedastic two-limit tobit regression when the common features of
skewness and kurtosis of bounded-domain variables are encountered, and
perform only slightly worse when skewness and kurtosis are small.

In Section~\ref{sec:lossaversion}, we use XBX regression to analyze
investment decisions in the behavioral economics experiment of
\cite{glaetzleruetzler+sutter+zeileis:2015}, where the occurrence and
extent of loss aversion are of interest. In contrast to standard
approaches, XBX regression can simultaneously capture the probability
of rational behavior and the mean amount of loss aversion.

The XBX regression model can be used to model bounded responses with
or without boundary observations in arbitrary intervals $[a, b]$, for
known $a$ and $b$ by simply transforming the observed responses $y_i'$
to $y_i = (y_i - a) / (b - a)$. Furthermore, the assumption of an
exponential distribution for $u$ has been made just for
convenience. The exponential distribution has only one parameter, and
its simple form allows the use of Gauss-Laguerre quadrature for
maximum approximate likelihood, also giving rise to simple marginal
moments for the latent uncensored variable;
see~(\ref{eq:b4mix-mean-var}). Other assumptions for the distribution
of $u$ can also result in shrinkage towards beta regression. We
believe that the benefits in applications will, at best, be marginal,
at the expense of potentially complicating estimation and key
characteristics of the latent distribution. It is also tempting to
allow for regression structures in $\nu$. However,
by~(\ref{eq:b4mix-mean-var}), $\nu$ controls the rate that the
marginal variance of the latent distribution changes with the
precision parameter~$\phi$. As a result, regression structures on both
$\nu$ and $\phi$ can offer little extra flexibility compared to a
regression structure on $\phi$ and estimation of $\nu$. Additional
flexibility can be achieved in a more interpretable manner by
exploring finite mixtures of XBX regression models, similarly to how
\citet{gruen+kosmidis+zeileis:2012} defined finite mixtures of beta
regression models.

The \pkg{betareg} R package, since version 3.2-0, provides methods for
fitting and drawing inference from XBX regression models. Bayesian
inference for XBX regression is straightforward through template
modelling frameworks such as JAGS \citep{plummer:2003} and Stan
\citep{stan2017}, due to the alternative definition of XBX regression
in Section~\ref{sec:xbx} as the censored version of a mixture of
four-parameter beta regressions with the same
exponentially-distributed exceedance on the left and right of
$(0, 1)$. Methods based on Stan are provided by the \pkg{brms} R
\citep{brms} package since version 2.23.

\section*{Supplementary material}
The supplementary material is available at
\url{https://github.com/ikosmidis/XBX-supplementary} and provides the
supplementary material document referenced above, and instructions and
scripts to reproduce all numerical results and figures in the main
text and the supplementary material document.

\section*{Declarations}

For the purpose of open access, the authors have applied a Creative
Commons Attribution (CC BY) license to any Author Accepted Manuscript
version arising from this submission.

\bibliography{xbx}
\bibliographystyle{jss2}

\includepdf[pages=-]{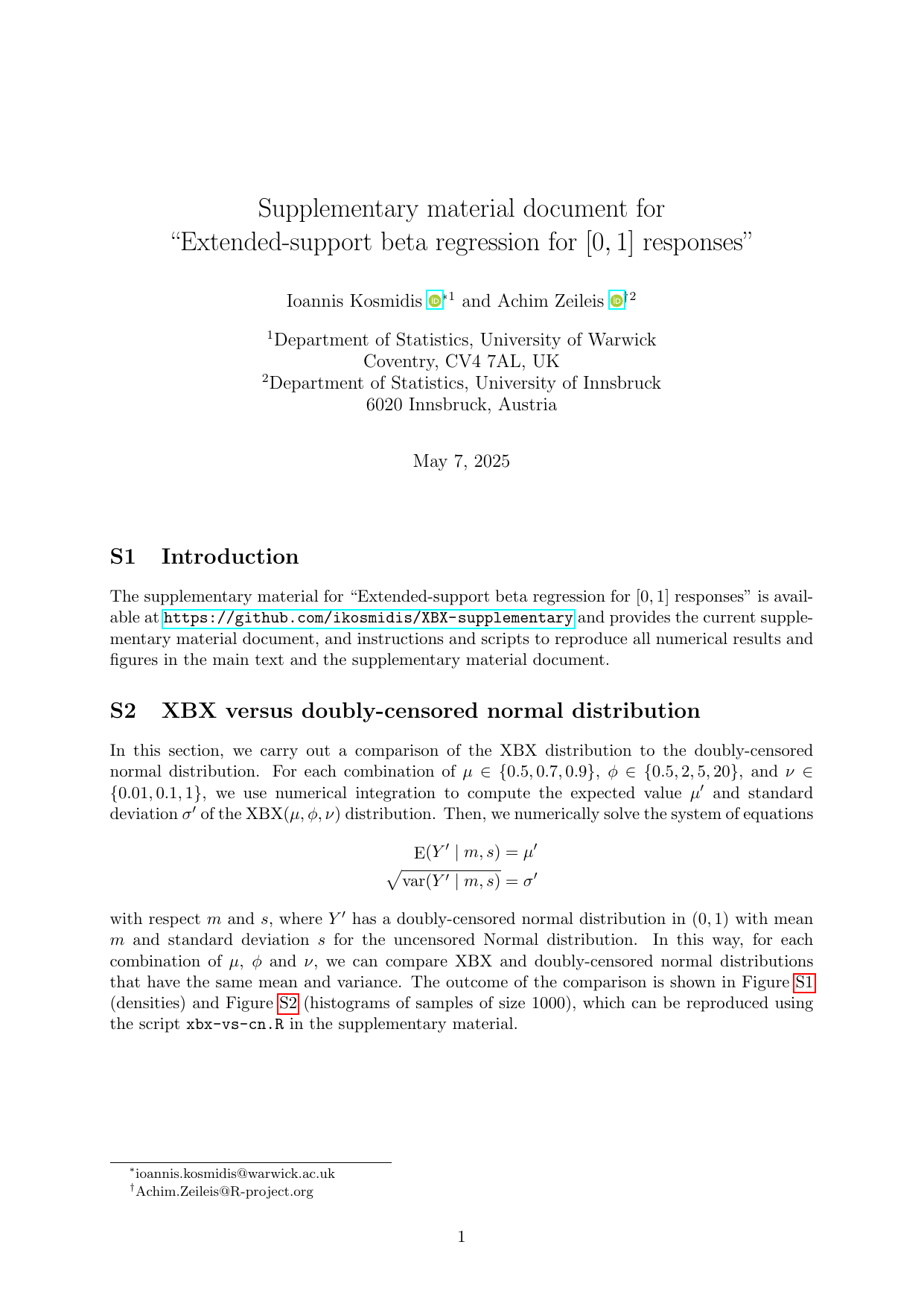}

\end{document}